\newcommand{\indep}{\raisebox{0.05em}{\rotatebox[origin=c]{90}{$\models$}}\,}
\newcommand{\reals}{\mathbb{R}}
\renewcommand{\P}[1]{\mathbb{P}\left[#1\right]}
\newcommand{\E}[1]{\mathbb{E}\left[#1\right]}
\newcommand{\I}[1]{\mathbbm{1}\left[#1\right]}
\newcommand{\alphalo}[0]{\alpha/2}
\newcommand{\alphaup}[0]{1-\alpha/2}
\newcommand{\alphalorm}[0]{\alpha/2}
\newcommand{\alphauprm}[0]{1-\alpha/2}
\newtheorem{definition}{Definition}
\newtheorem{assumption}{Assumption}
\newtheorem{theorem}{Theorem}
\title{A comparison of some conformal quantile regression methods}
\author[1]{Matteo Sesia}
\author[1,2]{Emmanuel J. Cand{\`e}s}
\affil[1]{Department of Statistics, Stanford University}
\affil[2]{Department of Mathematics, Stanford University}
\begin{document}
\maketitle

\begin{abstract}
  We compare two recently proposed methods that combine ideas from conformal inference and quantile regression to produce locally adaptive and marginally valid prediction intervals under sample exchangeability (Romano et al., 2019 \cite{romano2019conformalized}; Kivaranovic et al., 2019 \cite{kivaranovic2019adaptive}). First, we prove that these two approaches are asymptotically efficient in large samples, under some additional assumptions. Then we compare them empirically on simulated and real data. Our results demonstrate that the method in Romano et al.~(2019) typically yields tighter prediction intervals in finite samples. Finally, we discuss how to tune these procedures by fixing the relative proportions of observations used for training and conformalization.
 \end{abstract}

\section{Introduction} \label{sec:introduction}

\subsection{Background and motivation}

Given a set of $n$ points $\{(X_i,Y_i)\}_{i=1}^{n}$, with $Y_{i} \in \reals$ and $X_{i} \in \reals^d$, we consider the problem of constructing a prediction interval for a new point~$Y_{n+1}$ based on the observed value of~$X_{n+1}$, assuming only that $\{(X_i,Y_i)\}_{i=1}^{n+1}$ are drawn exchangeably from some common distribution~$P_{XY}$. There exist a vast selection of statistical and machine learning algorithms that can provide approximate answers to this question~\cite{papadopoulos2001confidence,wager2014confidence}. However, the uncertainty in any of their predictions cannot be quantified without making strong assumptions and large-sample asymptotic approximations that may not be easily justifiable in applications.
Conformal inference~\cite{vovk1999machine,vovk2005algorithmic,vovk2009line,papadopoulos2002inductive,papadopoulos2007,papadopoulos2008,papadopoulos2008inductive,papadopoulos2011,lei2018distribution} addresses this problem by constructing an exact \textit{marginal} prediction interval $\hat{C}_{\alpha}(X_{n+1})$ such that
\begin{align} \label{eq:marginal-coverage}
  \P{Y_{n+1} \in \hat{C}_{\alpha}(X_{n+1})} \geq 1-\alpha,
\end{align}
while relying only on the exchangeability of the $n+1$ points.
This interval is said to be \textit{marginal} because all variables in \eqref{eq:marginal-coverage} are treated as random, including $(X_{n+1}, Y_{n+1})$ and the data used to train~$\hat{C}$. Therefore, it is not guaranteed that the interval will cover~$Y_{n+1}$ conditional on a particular observed value of~$X_{n+1}$, or a fixed prediction model~$\hat{C}$. Despite this limitation, conformal prediction intervals are attractive because their coverage is guaranteed on average regardless of the distribution of the data.

Ideally, prediction intervals should be as narrow as possible while maintaining coverage. Let us denote by $q_{\alpha}(x_{n+1})$ the $\alpha$-th quantile of the conditional distribution of $Y$ given $X_{n+1}=x_{n+1}$. Then a desirable oracle prediction interval would be
\begin{align} \label{eq:oracle-interval}
  C^{\text{oracle}}_{\alpha}(X_{n+1}) = [q_{\alphalo}(X_{n+1}), q_{\alphaup}(X_{n+1})].
\end{align}
By construction, this is the narrowest \textit{symmetric} prediction interval that has valid coverage conditional on the value of $X_{n+1}$. Here, we say that a prediction interval is symmetric if $Y_{n+1}$ is equally likely to be smaller or larger than predicted. Unfortunately, the oracle interval in \eqref{eq:oracle-interval} is unachievable in practice because we do not know $P_{Y \mid X}$. The goal of conformal quantile regression~\cite{romano2019conformalized} is to form a practical prediction interval $\hat{C}_{\alpha}$ that estimates~\eqref{eq:oracle-interval} as closely as possible while satisfying~\eqref{eq:marginal-coverage} exactly.
In this work, we compare theoretically and empirically the method from~\cite{romano2019conformalized} with a similar approach that was proposed independently in~\cite{kivaranovic2019adaptive}.

\subsection{Conformal quantile regression} \label{sec:cqr}

Throughout this paper, we follow the split-conformal approach to conformal inference~\cite{papadopoulos2002inductive,papadopoulos2008inductive,lei2018distribution} adopted in~\cite{romano2019conformalized} and~\cite{kivaranovic2019adaptive}, since it is computationally feasible even with large data. The first step of the conformal quantile regression method in~\cite{romano2019conformalized} is to split the data samples into two disjoint subsets, $\mathcal{I}_1$ and $\mathcal{I}_2$. Lower and upper quantile regression functions, namely $\hat{q}_{\alphalo}, \hat{q}_{\alphaup} : \reals^d \to \reals$, are fitted on the observations in $\mathcal{I}_1$. Any algorithm can be employed for this purpose; for example, one may rely on linear regression~\cite{koenker1978regression}, neural networks~\cite{taylor2000quantile} or random forests~\cite{meinshausen2006quantile}. In any case, this algorithm is treated as a black box. The estimated quantile functions are used to compute a \textit{conformity score} for each $i \in \mathcal{I}_2$:
\begin{align} \label{eq:conformity-score-1}
  E_i^{\text{CQR}} = \max \left\{ \hat{q}_{\alphalo}(X_i)-Y_i, Y_i-\hat{q}_{\alphaup}(X_i)\right\}.
\end{align}
Then, with $\hat{Q}_{1-\alpha}(E^{\text{CQR}}; \mathcal{I}_2)$ defined as the $\lceil (1-\alpha) (|\mathcal{I}_2| +1)| \rceil$-th largest element of $\{E_i\}_{i \in \mathcal{I}_2}$, the conformal prediction interval for $X_{n+1}$ is given by
\begin{align} \label{eq:conformal-interval-1}
\begin{split}
  \hat{C}^{\text{CQR}}_\alpha(X_{n+1})
  & = \left[ \hat{q}_{\alphalo}(X_{n+1}) - \hat{Q}_{1-\alpha}(E^{\text{CQR}}; \mathcal{I}_2) , \hat{q}_{\alphaup}(X_{n+1}) + \hat{Q}_{1-\alpha}(E^{\text{CQR}}; \mathcal{I}_2) \right]
\end{split}
\end{align}
This method is summarized in Algorithm~\ref{alg:cqr}, where it is denoted as CQR. It is shown in~\cite{romano2019conformalized} that $\hat{C}^{\text{CQR}}_\alpha(X_{n+1})$ has marginal coverage at level $1-\alpha$.

\begin{algorithm}[!htb]
  \SetAlgoLined
  \KwInput{}
    data $\{(X_i,Y_i)\}_{i=1}^{n}$, covariates for new sample $X_{n+1}$; \\
    proportion of data for training $\gamma \in (0,1)$; \\
    quantile regression algorithm $\hat{q}$; \\
    conformalization method $\psi \in \{\text{CQR}, \text{CQR-m}, \text{CQR-r}\}$; \\
    coverage level $\alpha \in (0,1)$.

    \vspace{5pt}
    \KwProcedure{}
    randomly split $\{1,\ldots,n\}$ into $\mathcal{I}_1, \mathcal{I}_2$, of size $|\mathcal{I}_1|=\gamma n$, $|\mathcal{I}_2|=n-|\mathcal{I}_1|$;\\
    fit the quantile regression functions $\hat{q}_{\alphalo}$ and $\hat{q}_{\alphaup}$ on $\{(X_i,Y_i)\}_{i \in \mathcal{I}_1}$; \\
    \uIf{$\psi = $ \normalfont{CQR} } {
      compute the conformity scores $E^{\text{CQR}}_i$ for each $i \in \mathcal{I}_2$, as in~\eqref{eq:conformity-score-1};\\
      compute $\hat{Q}_{1-\alpha}(E^{\text{CQR}}; \mathcal{I}_2)$; \\
      compute the prediction interval $\hat{C}_\alpha(X_{n+1})$, as in~\eqref{eq:conformal-interval-1}.
    }
    \uElseIf{$\psi = $ \normalfont{CQR-m} } {
    fit the median regression function $\hat{q}_{1/2}$ on $\{(X_i,Y_i)\}_{i \in \mathcal{I}_1}$; \\
    compute the conformity scores $E_i^{\text{CQR-m}}$ for each $i \in \mathcal{I}_2$, as in~\eqref{eq:conformity-score-2};\\
    compute $\hat{Q}_{1-\alpha}(E^{\text{CQR-m}}; \mathcal{I}_2)$; \\
      compute the prediction interval $\hat{C}_\alpha(X_{n+1})$, as in~\eqref{eq:conformal-interval-2}.
    }
    \uElseIf{$\psi =$ \normalfont{CQR-r} } {
    compute the conformity scores $E_i^{\text{CQR-r}}$ for each $i \in \mathcal{I}_2$, as in~\eqref{eq:conformity-score-3};\\
    compute $\hat{Q}_{1-\alpha}(E^{\text{CQR-r}}; \mathcal{I}_2)$; \\
      compute the prediction interval $\hat{C}_\alpha(X_{n+1})$, as in~\eqref{eq:conformal-interval-3}.
    }

  \KwOutput{}
  A prediction interval $\hat{C}_\alpha(X_{n+1})$.

  \caption{Conformal quantile regression}
  \label{alg:cqr}
\end{algorithm}

The method described in~\cite{kivaranovic2019adaptive} differs from CQR in the choice of the conformity scores, as outlined in Algorithm~\ref{alg:cqr} as CQR-m. Instead of~\eqref{eq:conformity-score-1}, one computes\footnote{Note that we present CQR-m with a slightly different notation than in~\cite{kivaranovic2019adaptive} to facilitate the comparison.}
\begin{align} \label{eq:conformity-score-2}
  E_i^{\text{CQR-m}} = \max \left\{
  \frac{\hat{q}_{\alphalo}(X_i)-Y_i}{\hat{q}_{1/2}(X_i)-\hat{q}_{\alphalo}(X_i)},
  \frac{Y_i-\hat{q}_{\alphaup}(X_i)}{\hat{q}_{\alphaup}(X_i)-\hat{q}_{1/2}(X_i)}
  \right\},
\end{align}
where $\hat{q}_{1/2}$ indicates an estimated median regression function obtained with the same black-box algorithm as $\hat{q}_{\alphalo}$ and $\hat{q}_{\alphaup}$. Then the conformal prediction interval for $X_{n+1}$ is given by:
\begin{align} \label{eq:conformal-interval-2}
  \begin{split}
    \hat{C}^{\text{CQR-m}}_\alpha(X_{n+1})
    & = \left[ \hat{q}_{\alphalo}(X_{n+1}) - \hat{\Delta}_{\alpha,\text{lo}}^{\text{CQR-m}},
      \hat{q}_{\alphaup}(X_{n+1}) + \hat{\Delta}_{\alpha,\text{up}}^{\text{CQR-m}} \right] \\
    \hat{\Delta}_{\alpha, \text{lo}}^{\text{CQR-m}} & = \hat{Q}_{1-\alpha}(E^{\text{CQR-m}}; \mathcal{I}_2) \left[ \hat{q}_{1/2}(X_{n+1}) - \hat{q}_{\alphalo}(X_{n+1}) \right], \\
    \hat{\Delta}_{\alpha, \text{up}}^{\text{CQR-m}} & = \hat{Q}_{1-\alpha}(E^{\text{CQR-m}}; \mathcal{I}_2) \left[ \hat{q}_{\alphaup}(X_{n+1}) - \hat{q}_{1/2}(X_{n+1}) \right].
\end{split}
\end{align}
One can show that $\hat{C}^{\text{CQR-m}}_\alpha(X_{n+1})$ also has marginal coverage at level $1-\alpha$~\cite{kivaranovic2019adaptive}.

We also find it interesting to consider a modified version of CQR-m that does not require estimating the regression median.\footnote{This was first suggested by Yaniv Romano through personal communication.} This third approach, listed in Algorithm~\ref{alg:cqr} as CQR-r, is based on the following conformity scores:
\begin{align} \label{eq:conformity-score-3}
  E_i^{\text{CQR-r}} = \max \left\{
  \frac{\hat{q}_{\alphalo}(X_i)-Y_i}{\hat{q}_{\alphaup}(X_i)-\hat{q}_{\alphalo}(X_i)},
  \frac{Y_i-\hat{q}_{\alphaup}(X_i)}{\hat{q}_{\alphaup}(X_i)-\hat{q}_{\alphalo}(X_i)}
  \right\}.
\end{align}
The CQR-r prediction intervals are
\begin{align} \label{eq:conformal-interval-3}
\begin{split}
    \hat{C}^{\text{CQR-r}}_\alpha(X_{n+1})
    & = \left[ \hat{q}_{\alphalo}(X_{n+1}) - \hat{\Delta}_{\alpha}^{\text{CQR-r}},
      \hat{q}_{\alphaup}(X_{n+1}) + \hat{\Delta}_{\alpha}^{\text{CQR-r}} \right] \\
    \hat{\Delta}_{\alpha}^{\text{CQR-r}} & = \hat{Q}_{1-\alpha}(E^{\text{CQR-r}}; \mathcal{I}_2) \left[ \hat{q}_{\alphaup}(X_{n+1}) - \hat{q}_{\alphalo}(X_{n+1}) \right].
\end{split}
\end{align}

It is easy to show that $\hat{C}^{\text{CQR-r}}_\alpha(X_{n+1})$ also attains marginal coverage at level $1-\alpha$. A proof is omitted because it would be identical to those in~\cite{romano2019conformalized} and~\cite{kivaranovic2019adaptive}. CQR-r is similar in spirit to CQR-m, but it has a more direct interpretation: the conformity scores of CQR-r in~\eqref{eq:conformity-score-3} weight the distance of $Y$ from the corresponding prediction interval by the inverse width of the interval. Therefore, the conformalization expands or contracts the black-box prediction bands proportionally to their width, instead of adding a constant shift as in CQR. Since it is not clear how the regression median $q_{1/2}$ should generally be related to the upper and lower $\alpha$-quantiles of $P_{Y \mid X}$, we find this approach slightly more intuitive than CQR-m.



\section{Theoretical analysis} \label{sec:theory}

We show that the output of Algorithm~\ref{alg:cqr} converges to the oracle bands in~\eqref{eq:oracle-interval} as $n$ grows, if the black-box quantile regression estimates are consistent and a few additional assumptions hold. This can be established for any of the three alternative types of conformity scores discussed in this paper, which are therefore asymptotically equivalent in this sense. 

\begin{assumption}[i.i.d.]\label{assumption:iid}
The points $\{(X_i,Y_i)\}_{i=1}^{n+1}$ are drawn i.i.d.~from some distribution $P_{XY}$.
\end{assumption}

\begin{assumption}[regularity] \label{assumption:regularity}
The probability density of the conformity scores, either in~\eqref{eq:conformity-score-1},~\eqref{eq:conformity-score-2} or~\eqref{eq:conformity-score-3}, depending on the conformalization method in Algorithm~\ref{alg:cqr}, is bounded away from zero in an open neighborhood of zero.
\end{assumption}

\begin{assumption}[consistency] \label{assumption:consistency}
  For simplicity, denote by $n$ the size of the training data set $\mathcal{I}_1$ used to fit the quantile regression functions  $\hat{q}$. Let $X$ be a new observation independent of $\mathcal{I}_1$. Then the assumption is that, for 
$n$ large enough,
  \begin{align*}
    \P{\E{\left( \hat{q}_{\alphalorm}(X) - q_{\alphalorm}(X) \right)^2 \mid \hat{q}_{\alphalorm}, \hat{q}_{\alphauprm}} \leq \eta_{n} }
    & \geq 1-\rho_{n}, \\
    \P{\E{\left( \hat{q}_{\alphauprm}(X) - q_{\alphauprm}(X) \right)^2 \mid \hat{q}_{\alphalorm}, \hat{q}_{\alphauprm}} \leq \eta_{n} }
    & \geq 1-\rho_{n},
  \end{align*}
  for some sequences $\eta_{n}=o(1)$ and $\rho_{n}=o(1)$, as $n \to \infty$.
\end{assumption}

Assumption~\ref{assumption:consistency} is similar to that used in~\cite{lei2018distribution} for mean regression estimators, and it is weaker than requiring $\hat{q}_{\alphalorm}(X) \overset{L^2}{\to} q_{\alphalorm}(X)$ and $\hat{q}_{\alphauprm}(X) \overset{L^2}{\to} q_{\alphauprm}(X)$ as $n \to \infty$, by Markov's inequality. 

\begin{theorem} \label{thm:oracle-approximation}
Under Assumptions~\ref{assumption:iid}--\ref{assumption:consistency}, the conformal quantile regression bands $\hat{C}_{\alpha}$ obtained with Algorithm~\ref{alg:cqr} satisfy
\begin{align*}
  L\left( \hat{C}_{\alpha}(X_{n+1}) \,\triangle\, C_{\alpha}^{\mathrm{oracle}}(X_{n+1}) \right) = o_{\mathbb{P}}(1),
\end{align*}
as $|\mathcal{I}_1|, |\mathcal{I}_2| \to \infty$. Here, $L(A)$ indicates the Lebesgue measure of the set $A$, and $\triangle$ is the symmetric difference operator, i.e., $A \,\triangle\, B = (A \setminus B) \cup (B \setminus A)$.
\end{theorem}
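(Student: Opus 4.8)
The plan is to reduce the statement to the convergence of the two endpoints of $\hat{C}_\alpha(X_{n+1})$ and then to control those endpoints separately. First I would record the elementary fact that for real intervals $L([a,b]\,\triangle\,[c,d]) \le |a-c| + |b-d|$, which follows by writing indicators of intervals as differences of half-line indicators and applying the triangle inequality. In all three variants the interval has the form $[\hat q_{\alphalo}(X_{n+1}) - \hat\Delta_{\mathrm{lo}},\ \hat q_{\alphaup}(X_{n+1}) + \hat\Delta_{\mathrm{up}}]$, where the corrections equal $\hat Q_{1-\alpha}(E;\mathcal I_2)$ times a nonnegative scale factor: $1$ for CQR; $\hat q_{1/2}(X_{n+1}) - \hat q_{\alphalo}(X_{n+1})$ and $\hat q_{\alphaup}(X_{n+1}) - \hat q_{1/2}(X_{n+1})$ for CQR-m; and $\hat q_{\alphaup}(X_{n+1}) - \hat q_{\alphalo}(X_{n+1})$ for CQR-r. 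Since the oracle endpoints are $q_{\alphalo}(X_{n+1})$ and $q_{\alphaup}(X_{n+1})$, it suffices to prove that $\hat q_{\alphalo}(X_{n+1}) \overset{\mathbb P}{\to} q_{\alphalo}(X_{n+1})$, that $\hat q_{\alphaup}(X_{n+1}) \overset{\mathbb P}{\to} q_{\alphaup}(X_{n+1})$, and that every correction term vanishes in probability.

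The first two convergences come directly from Assumption~\ref{assumption:consistency}. Conditioning on the fitted functions $\hat q$ (equivalently on $\mathcal I_1$), the new point $X_{n+1}$ is an independent fresh draw, so a conditional Chebyshev inequality on the good event of Assumption~\ref{assumption:consistency} gives $\P{|\hat q_{\alphalo}(X_{n+1}) - q_{\alphalo}(X_{n+1})| > \epsilon} \le \eta_n/\epsilon^2 + \rho_n \to 0$ for every $\epsilon>0$, and similarly for $\hat q_{\alphaup}$ (and for $\hat q_{1/2}$ in CQR-m, under the analogous consistency hypothesis). The same argument shows each scale factor converges in probability to the finite oracle width, e.g.\ $\hat q_{\alphaup}(X_{n+1}) - \hat q_{\alphalo}(X_{n+1}) \overset{\mathbb P}{\to} q_{\alphaup}(X_{n+1}) - q_{\alphalo}(X_{n+1})$. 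Hence, by Slutsky, the whole problem collapses to showing that the empirical quantile of the conformity scores satisfies $\hat Q_{1-\alpha}(E;\mathcal I_2) \overset{\mathbb P}{\to} 0$.

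This last convergence is the heart of the argument, and I would organize it around the \emph{oracle conformity scores} obtained by substituting the true $q$ for $\hat q$ in \eqref{eq:conformity-score-1}, \eqref{eq:conformity-score-2} or \eqref{eq:conformity-score-3}; call them $E_i^\ast$ and let $G$ be their common fixed distribution. A direct computation shows that in each variant $E_i^\ast \le 0$ if and only if $q_{\alphalo}(X_i) \le Y_i \le q_{\alphaup}(X_i)$, so $\P{E^\ast \le 0} = \alphauprm - \alphalorm = 1-\alpha$ by definition of the conditional quantiles; that is, $0$ is exactly the population $(1-\alpha)$-quantile of $G$. Because the $E_i^\ast$ are i.i.d.\ from the fixed $G$, their empirical CDF converges uniformly to $G$ by Glivenko--Cantelli. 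Meanwhile the estimated scores obey the pointwise bound $|E_i - E_i^\ast| \le |\hat q_{\alphalo}(X_i) - q_{\alphalo}(X_i)| + |\hat q_{\alphaup}(X_i) - q_{\alphaup}(X_i)|$ (the analogue for CQR-m and CQR-r also involves the denominators, which are bounded below once the quantile gaps are consistently estimated), and Assumption~\ref{assumption:consistency} forces the average squared discrepancy $|\mathcal I_2|^{-1}\sum_{i\in\mathcal I_2}(E_i-E_i^\ast)^2$ to be $o_{\mathbb P}(1)$ after conditioning on $\hat q$. Sandwiching $\I{E_i \le t} \le \I{E_i^\ast \le t+\tau} + \I{|E_i-E_i^\ast|>\tau}$ together with its mirror image, then applying Markov to the discrepancy term, yields $G(t-\tau) - o_{\mathbb P}(1) \le \hat F_{|\mathcal I_2|}(t) \le G(t+\tau) + o_{\mathbb P}(1)$ for the empirical CDF $\hat F_{|\mathcal I_2|}$ of the estimated scores. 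Finally, Assumption~\ref{assumption:regularity} makes $G$ strictly increasing in a neighborhood of $0$, so choosing $\tau$ small and comparing with the level $1-\alpha$ converts this CDF control into the bound $|\hat Q_{1-\alpha}(E;\mathcal I_2)| \le \epsilon$ with probability tending to one, which is the desired $\hat Q_{1-\alpha}(E;\mathcal I_2) \overset{\mathbb P}{\to} 0$.

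I expect the main obstacle to be precisely this last step: the conformity scores depend on the fitted $\hat q$, so they are neither i.i.d.\ nor drawn from a fixed law, and the empirical quantile map is discontinuous. Both difficulties are handled by the two devices above --- conditioning on $\mathcal I_1$ so that the $\mathcal I_2$ scores become i.i.d.\ given $\hat q$ and can be compared to the genuinely i.i.d.\ oracle scores, and using the density lower bound of Assumption~\ref{assumption:regularity} to turn uniform control of the CDF near $0$ into control of the quantile. Care is also needed to verify that the $L^2$ consistency of Assumption~\ref{assumption:consistency} transfers to the empirical average $|\mathcal I_2|^{-1}\sum_{i\in\mathcal I_2} (E_i - E_i^\ast)^2$ over the fresh sample $\mathcal I_2$, which again uses that, conditional on $\hat q$, these summands are i.i.d.\ with conditional mean at most a constant multiple of $\eta_n$ on the good event.
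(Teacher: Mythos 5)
Your proposal is correct and shares the paper's overall architecture --- reduce to convergence of the two interval endpoints, get endpoint consistency from Assumption~\ref{assumption:consistency} via a conditional Chebyshev bound, and then devote the real work to showing $\hat{Q}_{1-\alpha}(E;\mathcal{I}_2) = o_{\mathbb{P}}(1)$ by comparing the estimated conformity scores to oracle scores whose population $(1-\alpha)$-quantile is exactly $0$ --- but the execution of that key step is genuinely different. The paper partitions $\mathcal{I}_2$ into ``good'' and ``bad'' points according to a bad set $B_n$ built from $\mathcal{I}_1$, bounds the number of bad points by Hoeffding's inequality, sandwiches the full-sample empirical quantile between subsample quantiles at perturbed levels, and then invokes classical asymptotic theory for empirical quantiles of the i.i.d.\ oracle scores (where Assumption~\ref{assumption:regularity} enters implicitly through the cited theory). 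You instead avoid any partition: you control the empirical CDF of the estimated scores directly through the indicator sandwich $\I{E_i \le t} \le \I{E_i^* \le t+\tau} + \I{|E_i - E_i^*| > \tau}$, bound the fraction of large-discrepancy points by Markov applied to the average squared discrepancy (whose conditional mean Assumption~\ref{assumption:consistency} controls), use Glivenko--Cantelli for the oracle CDF, and then invert CDF control into quantile control explicitly via the density lower bound. Your route is more elementary and self-contained --- no order-statistic manipulations like the paper's inequality~\eqref{eq:quantile-sandwich}, no external citation --- and it makes the role of Assumption~\ref{assumption:regularity} completely explicit; the paper's partition-based argument, in exchange, yields more quantitative information (thresholds $\eta_n^{1/3}$, deviations of order $\sqrt{n \log n}$), though no rates are claimed in the theorem. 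Two shared caveats: both you and the paper implicitly assume the conditional law of $Y$ has no atoms at the quantiles so that $\P{E^* \le 0} = 1-\alpha$ exactly, and both treat CQR-m and CQR-r summarily --- though you are actually more careful than the paper in flagging that CQR-m requires an analogous consistency hypothesis on $\hat{q}_{1/2}$ and that the ratio-based scores need denominators bounded away from zero.
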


The proof of Theorem~\ref{thm:oracle-approximation} can be found in Appendix~\ref{sec:proofs} and is inspired by that of Theorem~3.4 in~\cite{lei2018distribution}, although the oracle and the conformalization methods considered here are different. Theorem~\ref{thm:oracle-approximation} establishes a stronger form of statistical efficiency for conformal quantile regression compared to the result in~\cite{lei2018distribution}, which assumes $Y = \mu(X) + \epsilon$, for some regression function~$\mu$, and homoscedastic noise~$\epsilon$. In general, the conformal prediction intervals described in~\cite{lei2018distribution} will not converge to those of our oracle if the noise is heteroscedastic, regardless of the consistency of the black-box regression estimator~$\hat{\mu}$. By contrast, conformal quantile regression is efficient in the sense that, under Theorem~\ref{thm:oracle-approximation}, the prediction bands converge to those of the oracle, which are the narrowest possible bands achieving conditional coverage. Finally, the asymptotic consistency assumption may be verified theoretically for some specific algorithms under certain conditions, e.g.~random forests~\cite{meinshausen2006quantile}. In any case, our result provides some theoretical backing to conformal quantile regression even if the assumptions cannot be verified in practice.

As an immediate corollary of Theorem~\ref{thm:oracle-approximation}, note that it also follows that conformal quantile regression bands have asymptotic conditional coverage, which we define as in~\cite{lei2018distribution}.

\begin{definition}[Asymptotic conditional coverage] We say that a sequence $\hat{C}_n$ of random prediction bands has asymptotic conditional coverage at the level $1-\alpha$ if there exists a sequence of random sets $\Lambda_n \subseteq \mathbb{R}^d$ such that $\P{X \in \Lambda_n} = 1-o_{\mathbb{P}}(1)$ and
\begin{align*}
  \sup_{x \in \Lambda_n} \left| \P{Y \in \hat{C}_n(x) \mid X=x} - (1-\alpha) \right| = o_{\mathbb{P}}(1).
\end{align*}
\end{definition}

Despite being asymptotically efficient under Assumptions~\ref{assumption:iid}--\ref{assumption:consistency}, the three conformalization methods in Algorithm~\ref{alg:cqr} typically perform differently with finite data, as discussed next.

\section{Empirical comparison} \label{sec:empirical}

The data and code used in this section are on \url{https://github.com/msesia/cqr-comparison}.

\subsection{Black-box quantile regression}

In the following, we utilize two alternative black-box quantile regressors, implemented and trained as in~\cite{romano2019conformalized}. The first procedure is based on quantile regression forests~\cite{meinshausen2006quantile}. We fit 1000 trees and set the other tuning parameters equal to their default values. The second black box is a neural network~\cite{taylor2000quantile} with three fully connected layers and ReLU non-linearities. We have chosen this design, which is slightly different from that in~\cite{kivaranovic2019adaptive}, because it leads to conformal prediction intervals that are tighter than those reported in~\cite{kivaranovic2019adaptive}. If the estimated lower and upper quantiles overlap, which may sometimes occur, we swap them. The nominal level of the black boxes is tuned so that their empirical coverage, estimated by cross-validation, is approximately equal to $1-2\alpha$. We have observed that this heuristic generally leads to tighter conformal intervals compared to those obtained by directly requesting the black boxes to estimate $q_{\alpha/2}$ and $q_{1-\alpha/2}$; see Section~\ref{sec:exp-data} and Appendix~\ref{app:plots} for empirical evidence. Throughout this section, we set $\alpha=0.1$.

\subsection{Experiments with artificial data}

We begin by considering the same experiment based on artificial data as in~\cite{kivaranovic2019adaptive}. We simulate $X \sim \text{Unif}([0,1]^d)$, for $d=100$, and $Y \in \reals$ from:
\begin{align} \label{eq:artificial-data}
  & Y = f(\beta'X) + \epsilon \sqrt{1+(\beta'X)^2} ,
\end{align}
where $f(x) = 2\sin(\pi x) + \pi x$, $\beta'=(1,1,1,1,1,0,\ldots,0)$ and $\epsilon$ is independent standard Gaussian noise. Here, we have access to a natural benchmark: the oracle that knows $P_{Y \mid X}$ exactly. It follows from~\eqref{eq:artificial-data} that the expected width of the oracle prediction bands is:
\begin{align*}
  \E{q_{\alphaup}(X)-q_{\alphalo}(X)}
  & = 2 \, \E{\sqrt{1+(\beta'X)^2}} Q_{\alphaup}(\epsilon) \approx 8.91,
\end{align*}
where $Q_{\alpha}(\epsilon)$ is the $\alpha$-quantile of the standard Gaussian distribution.

The performances of CQR, CQR-m, and CQR-r are compared in Figure~\ref{fig:artificial} as a function of the number of data points $n$. The proportion of observations used to train the black box is $3/4$, as in~\cite{kivaranovic2019adaptive}. The coverage and average width of the prediction bands is evaluated on an independent test set of size $20,000$. The experiment is repeated for 100 independent realizations of the data and of the test set. The width and coverage of the conformal prediction bands approach those of the oracle as the sample size increases. This suggests that the estimated black-box quantiles may be approximately consistent. However, CQR typically produces narrower bands compared to the other methods.

\begin{figure}[!htb]
  \centering
  \includegraphics[]{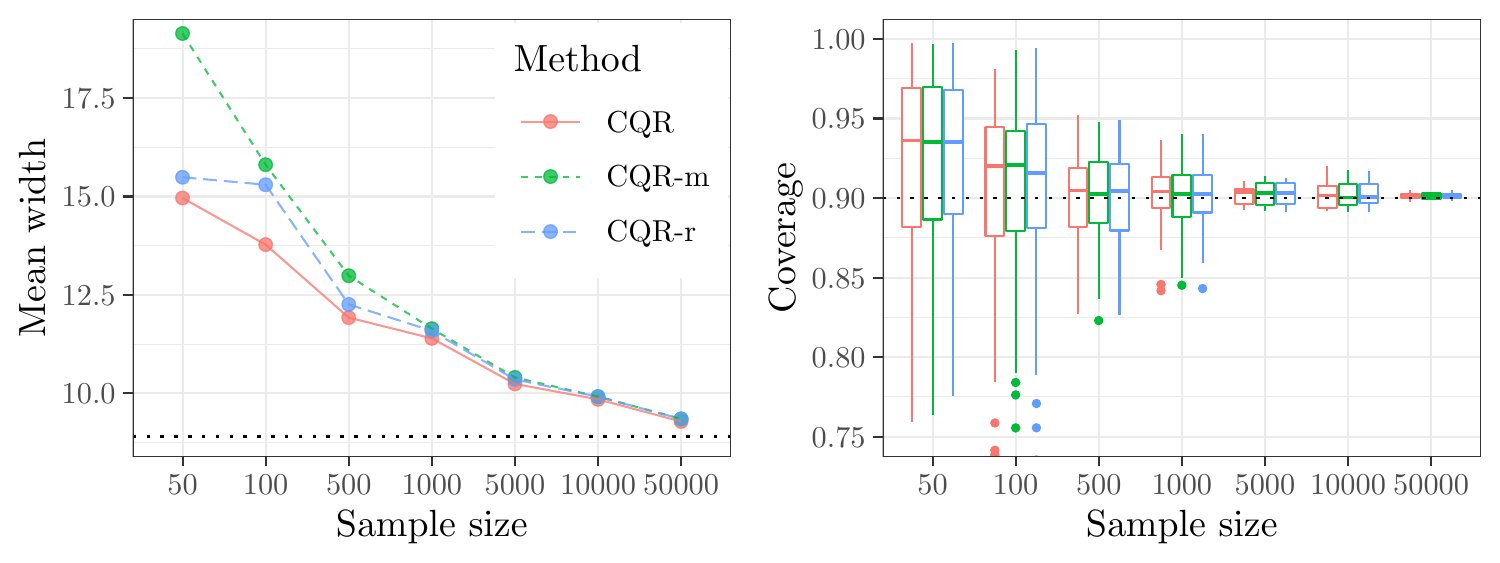}
  \caption{Conformal prediction bands obtained with different conformalization methods on artificial data, as a function of the sample size. The black dotted line on the left indicates the width of the oracle predictions. The black dotted line on the right indicates the nominal coverage level (90\%).}
  \label{fig:artificial}
\end{figure}

\subsection{Experiments with real data} \label{sec:exp-data}

We now apply Algorithm~\ref{alg:cqr} on the same data analyzed in~\cite{romano2019conformalized} and~\cite{kivaranovic2019adaptive}.\footnote{We have excluded the X-ray data in~\cite{kivaranovic2019adaptive} because we are unsure of how to replicate the pre-processing.}
 Some details about these data sets and information on the corresponding sources are summarized in Table~\ref{tab:datasets}. For all data sets except \textit{homes}, we randomly hold out 20\% of the samples for testing. Then we divide the remaining observations into two disjoint sets, $\mathcal{I}_1$ and $\mathcal{I}_2$, to train the black box and conformalize the prediction bands, respectively. The response variables $Y$ are standardized as in~\cite{romano2019conformalized} and~\cite{kivaranovic2019adaptive}.  We explore different values of the fraction of samples used for training: $|\mathcal{I}_1|=\gamma n$, with $\gamma \in \{0.1, 0.25, 0.5, 0.6, 0.7,0.8,0.9,0.95,0.98\}$. We are interested in this comparison because different values are used in~\cite{romano2019conformalized} and~\cite{kivaranovic2019adaptive}: $\gamma=0.5$ and $\gamma=0.75$, respectively. In the case of the \textit{homes} data, we follow in the footsteps of~\cite{kivaranovic2019adaptive}: first, we randomly hold out 3613 test samples; then, we train the black box on 15,000 samples and conformalize on the remaining~3000.
All experiments are repeated 10 times, starting from the data splitting.

\begin{table}[!htb]
\centering
\begin{tabular}{c|c|c|c|c}
\hline
Name & Description & $n$ & $d$ & Source \\
\hline
bike & bike sharing & 10886 & 18 &~\cite{data-bike}\\
bio & physicochemical properties of protein tertiary structures & 45730 & 9 &~\cite{data-bio}\\
blog & blog feedback & 52397 & 280 &~\cite{data-blog} \\
community & community and crime & 1994 & 100 &~\cite{data-community}\\
concrete & concrete compressive strength & 1030 & 8 &~\cite{data-concrete}\\
facebook 1 & facebook comment volume & 40948 & 53 &~\cite{data-facebook}\\
facebook 2 & facebook comment volume & 81311 & 53 &~\cite{data-facebook}\\
homes & sale prices of homes in King County, Washington & 21613 & 19 &~\cite{data-homes}\\
meps 19 & medical expenditure panel survey & 15785 & 139 &~\cite{data-meps19} \\
meps 20 & medical expenditure panel survey & 17541 & 139 &~\cite{data-meps20} \\
meps 21 & medical expenditure panel survey & 15656 & 139 &~\cite{data-meps21} \\
star & Tennessee's student-teacher achievement ratio & 2161 & 39 &~\cite{data-star}\\
\hline
\end{tabular}
\caption{Data sets for our empirical analysis, with numbers of samples ($n$) and features ($d$).}
\label{tab:datasets}
\end{table}

The test-set performances of CQR, CQR-m, and CQR-r are summarized in Tables~\ref{tab:result-summary-width} and~\ref{tab:result-summary-coverage}. These quantities correspond to the best choice of black box and the optimal value of the hyper-parameter~$\gamma$, defined separately for each algorithm. The CQR method consistently produces the narrowest valid prediction bands, while CQR-m and CQR-r are often comparable.

\begin{table}[!htb]
\centering
\begin{tabular}{|c|c|c|c|}
\hline
\multicolumn{1}{|c|}{ } & \multicolumn{3}{c|}{Width} \\
\cline{2-4}
\multirow{-2}{*}{Dataset} & CQR & CQR-r & CQR-m\\
\hline
bike & \textbf{0.503} (0.024) & 0.520 (0.024) & 0.521 (0.023)\\
\hline
bio & \textbf{0.995} (0.037) & 1.048 (0.049) & 1.114 (0.019)\\
\hline
blog & \textbf{1.269} (0.040) & 1.462 (0.148) & 1.351 (0.109)\\
\hline
community & \textbf{1.461} (0.116) & 1.548 (0.066) & 1.617 (0.063)\\
\hline
concrete & \textbf{0.378} (0.056) & 0.387 (0.063) & 0.384 (0.059)\\
\hline
facebook-1 & \textbf{1.117} (0.048) & 1.188 (0.043) & 1.164 (0.127)\\
\hline
facebook-2 & \textbf{1.110} (0.051) & 1.172 (0.066) & 1.116 (0.066)\\
\hline
homes & \textbf{0.477} (0.013) & 0.491 (0.013) & 0.492 (0.013)\\
\hline
meps-19 & \textbf{2.300} (0.164) & 2.349 (0.175) & 2.442 (0.364)\\
\hline
meps-20 & \textbf{2.309} (0.121) & 2.467 (0.313) & 2.467 (0.168)\\
\hline
meps-21 & \textbf{2.201} (0.076) & 2.273 (0.119) & 2.343 (0.337)\\
\hline
star & \textbf{0.179} (0.006) & 0.180 (0.010) & 0.181 (0.006)\\
\hline
\end{tabular}
\caption{Average width (and standard deviation) of the conformal quantile regression prediction bands obtained with different conformalization methods on the data sets listed in Table~\ref{tab:datasets}. The corresponding coverage is reported in Table~\ref{tab:result-summary-coverage}. The smallest value on each row is written in bold.}
\label{tab:result-summary-width}
\end{table}

\begin{table}[!htb]
\centering
\begin{tabular}{|c|c|c|c|}
\hline
\multicolumn{1}{|c|}{ } & \multicolumn{3}{c|}{Coverage} \\
\cline{2-4}
\multirow{-2}{*}{Dataset} & CQR & CQR-r & CQR-m\\
\hline
bike & 0.899 (0.012) & 0.901 (0.012) & 0.900 (0.012)\\
\hline
bio & 0.891 (0.012) & 0.893 (0.016) & 0.895 (0.008)\\
\hline
blog & 0.905 (0.003) & 0.901 (0.007) & 0.903 (0.004)\\
\hline
community & 0.896 (0.025) & 0.899 (0.025) & 0.902 (0.017)\\
\hline
concrete & 0.875 (0.061) & 0.879 (0.061) & 0.877 (0.059)\\
\hline
facebook-1 & 0.901 (0.006) & 0.898 (0.004) & 0.902 (0.002)\\
\hline
facebook-2 & 0.900 (0.003) & 0.900 (0.002) & 0.900 (0.002)\\
\hline
homes & 0.902 (0.009) & 0.904 (0.009) & 0.904 (0.009)\\
\hline
meps-19 & 0.902 (0.008) & 0.902 (0.007) & 0.900 (0.011)\\
\hline
meps-20 & 0.897 (0.004) & 0.898 (0.004) & 0.899 (0.006)\\
\hline
meps-21 & 0.899 (0.008) & 0.898 (0.009) & 0.897 (0.009)\\
\hline
star & 0.905 (0.024) & 0.904 (0.024) & 0.903 (0.020)\\
\hline
\end{tabular}
\caption{Average coverage (and standard deviation) of the prediction bands in Table~\ref{tab:result-summary-width}.}
\label{tab:result-summary-coverage}
\end{table}

The performances obtained with different black boxes and values of~$\gamma$ are reported in Figure~\ref{fig:community} for the \textit{community} data, and in Appendix~\ref{app:plots} for the other data sets. The results are shown as a function of~$\gamma$, which affects the average width of the prediction intervals as well as their variability. If $\gamma$ is small, the prediction intervals are not sufficiently adaptive because the black box cannot estimate the regression quantiles accurately. Larger values of~$\gamma$ may lead to tighter predictions on average, but at the cost of increased variability in the conditional coverage. In fact, the conditional coverage for new observations given the data may be lower than the expected marginal level, especially when $\gamma$ is very close to one and the sample size is not very large. The empirical results in Figure~\ref{fig:community} and Appendix~\ref{app:plots} suggest that fixing $\gamma \in [0.7,0.9]$ achieves a reasonable compromise for all data sets analyzed in this paper. This observation is also consistent with the choice in~\cite{kivaranovic2019adaptive}. 

The CQR-m method sometimes produces very wide intervals because the denominator in~\eqref{eq:conformity-score-2} can be close to zero (we added a small constant to prevent overflowing). An example is visible in the second plot in Figure~\ref{fig:community}, where some of the CQR-m prediction intervals based on a random forest black box are extremely large (hence the discontinuous vertical axis in Figure~\ref{fig:community}) when $\gamma \geq 0.9$. The CQR-r method is less susceptible to this problem because the denominator in the conformity scores in~\eqref{eq:conformity-score-3} is larger.

\begin{figure}[!htb]
  \centering
  \includegraphics[]{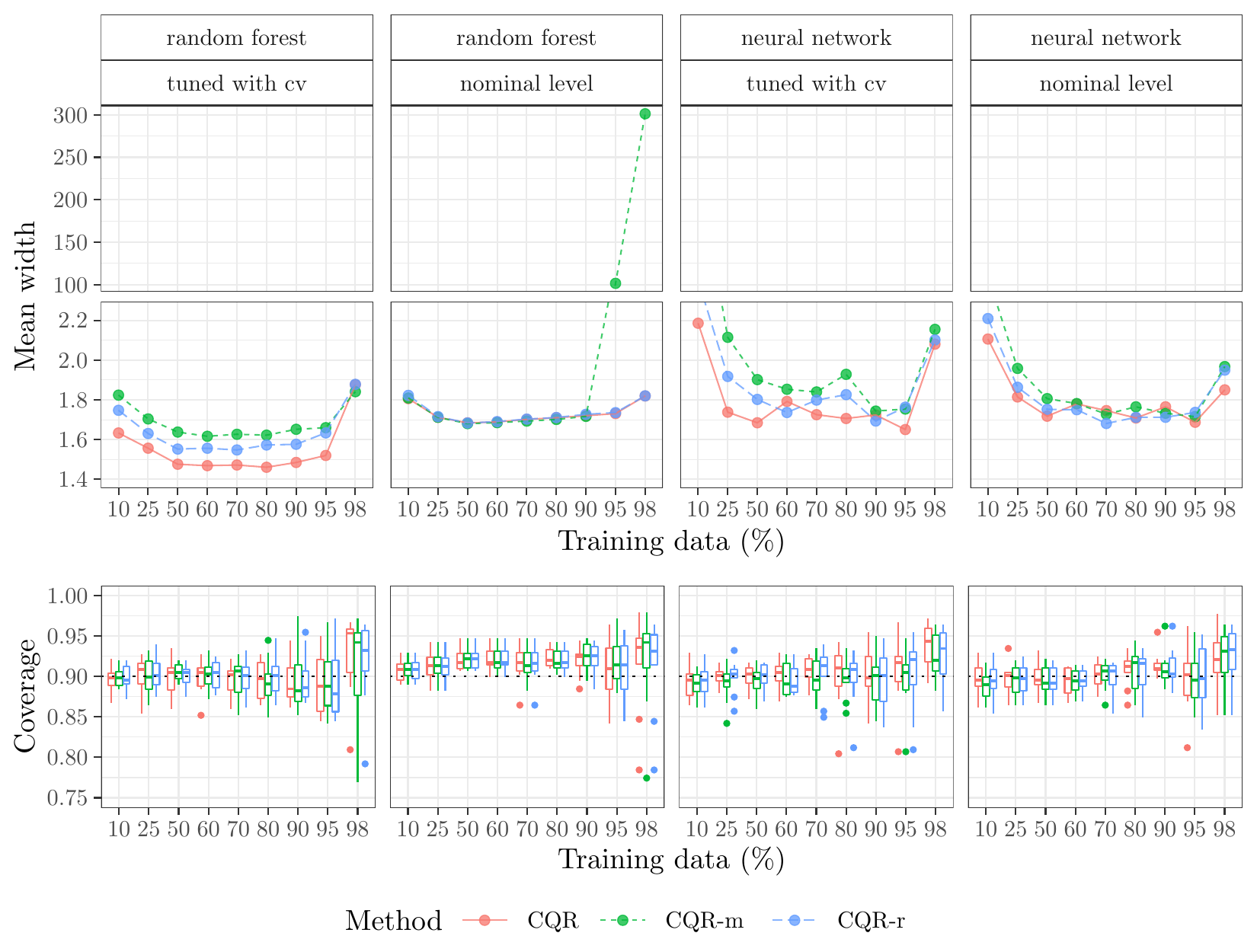}
  \caption{Conformal prediction bands obtained with different black boxes and conformalization methods on the \textit{community} data set from Table~\ref{tab:datasets}. The dotted black line in the lower plots indicates the nominal coverage level (90\%). A different black box is considered in each column. The vertical axis in the upper panels is discontinuous to facilitate the visualization of values on different scales.}
  \label{fig:community}
\end{figure}

\section{Conclusion} \label{sec:conclusion}

Early work on conformal prediction focused on estimating a mean regression function for $Y \mid X$ and building a fixed-width band around it~\cite{vovk1999machine,vovk2005algorithmic,vovk2009line,lei2018distribution}. Even though this strategy produces valid marginal prediction intervals regardless of $P_{Y \mid X}$, it is clearly designed with a homoscedastic regression model in mind and it may lead to be unnecessarily wide intervals in other cases. Locally-adaptive conformal prediction~\cite{papadopoulos2007,papadopoulos2008,papadopoulos2011,lei2018distribution} goes a step beyond this model by weighting the residuals according to a local estimate of their variance. Conformal quantile regression~\cite{romano2019conformalized} goes further by observing that the estimation of the regression mean is unnecessary if the ultimate goal is to build prediction intervals. This approach has already been shown to outperform earlier methods in practice~\cite{romano2019conformalized}.

In this paper, we have strengthened the case for conformal quantile regression by proving that it is asymptotically efficient in large samples, if the quantile regression estimates are consistent. The empirical comparison of three alternative conformity scores has shown that those proposed in~\cite{romano2019conformalized} are preferable because they typically lead to shorter prediction intervals in practice. Even though we have only explicitly considered symmetric intervals for simplicity, it is straightforward to generalize these methods to asymmetric intervals and conformity scores~\cite{romano2019conformalized}.
Finally, we have highlighted a bias-variance tradeoff in the choice of the proportion of data points used to train the black-box quantile regressors. Our empirical results show that it is usually better to invest more of the available data (between 70\% and 90\%, indicatively) to train the black-box than to conformalize the predictions. We hope that these results will be helpful to practitioners and may inspire others to develop even more powerful variations of conformal quantile regression.

\subsection*{Acknowledgements}

M.~S.~and E.~C.~are supported by the National Science Foundation under grant DMS 1712800.
E.~C.~is also supported by the Army Research Office under grant W911NF-17-1-0304.
We thank Yaniv Romano for helpful discussions, during which he suggested the CQR-r method.

\bibliography{bibliography}

\begin{thebibliography}{10}

\bibitem{romano2019conformalized}
Y.~Romano, E.~Patterson, and E.~J. Cand{\`e}s, ``Conformalized quantile
  regression,'' {\em arXiv preprint arXiv:1905.03222}, 2019.

\bibitem{kivaranovic2019adaptive}
D.~Kivaranovic, K.~D. Johnson, and H.~Leeb, ``Adaptive, distribution-free
  prediction intervals for deep neural networks,'' {\em arXiv preprint
  arXiv:1905.10634}, 2019.

\bibitem{papadopoulos2001confidence}
G.~Papadopoulos, P.~J. Edwards, and A.~F. Murray, ``Confidence estimation
  methods for neural networks: A practical comparison,'' {\em IEEE Transactions
  on Neural Networks}, vol.~12, no.~6, pp.~1278--1287, 2001.

\bibitem{wager2014confidence}
S.~Wager, T.~Hastie, and B.~Efron, ``Confidence intervals for random forests:
  The jackknife and the infinitesimal jackknife,'' {\em The Journal of Machine
  Learning Research}, vol.~15, no.~1, pp.~1625--1651, 2014.

\bibitem{vovk1999machine}
V.~Vovk, A.~Gammerman, and C.~Saunders, ``Machine-learning applications of
  algorithmic randomness,'' in {\em Proceedings of the Sixteenth International
  Conference on Machine Learning}, ICML '99, (San Francisco, CA, USA),
  pp.~444--453, Morgan Kaufmann Publishers Inc., 1999.

\bibitem{vovk2005algorithmic}
V.~Vovk, A.~Gammerman, and G.~Shafer, {\em Algorithmic Learning in a Random
  World}.
\newblock Berlin, Heidelberg: Springer-Verlag, 2005.

\bibitem{vovk2009line}
V.~Vovk, I.~Nouretdinov, A.~Gammerman, {\em et~al.}, ``On-line predictive
  linear regression,'' {\em The Annals of Statistics}, vol.~37, no.~3,
  pp.~1566--1590, 2009.

\bibitem{papadopoulos2002inductive}
H.~Papadopoulos, K.~Proedrou, V.~Vovk, and A.~Gammerman, ``Inductive confidence
  machines for regression,'' in {\em European Conference on Machine Learning},
  pp.~345--356, Springer, 2002.

\bibitem{papadopoulos2007}
H.~{Papadopoulos}, V.~{Vovk}, and A.~{Gammermam}, ``Conformal prediction with
  neural networks,'' in {\em 19th IEEE International Conference on Tools with
  Artificial Intelligence(ICTAI 2007)}, vol.~2, pp.~388--395, Oct 2007.

\bibitem{papadopoulos2008}
H.~Papadopoulos, A.~Gammerman, and V.~Vovk, ``Normalized nonconformity measures
  for regression conformal prediction,'' in {\em Proceedings of the 26th IASTED
  International Conference on Artificial Intelligence and Applications}, AIA
  '08, (Anaheim, CA, USA), pp.~64--69, ACTA Press, 2008.

\bibitem{papadopoulos2008inductive}
H.~Papadopoulos, ``Inductive conformal prediction: Theory and application to
  neural networks,'' in {\em Tools in Artificial Intelligence} (P.~Fritzsche,
  ed.), ch.~18, Rijeka: IntechOpen, 2008.

\bibitem{papadopoulos2011}
H.~Papadopoulos, V.~Vovk, and A.~Gammerman, ``Regression conformal prediction
  with nearest neighbours,'' {\em Journal of Artificial Intelligence Research},
  vol.~40, pp.~815--840, Jan. 2011.

\bibitem{lei2018distribution}
J.~Lei, M.~G’Sell, A.~Rinaldo, R.~J. Tibshirani, and L.~Wasserman,
  ``Distribution-free predictive inference for regression,'' {\em Journal of
  the American Statistical Association}, vol.~113, no.~523, pp.~1094--1111,
  2018.

\bibitem{koenker1978regression}
R.~Koenker and G.~Bassett~Jr, ``Regression quantiles,'' {\em Econometrica:
  Journal of the Econometric Society}, pp.~33--50, 1978.

\bibitem{taylor2000quantile}
J.~W. Taylor, ``A quantile regression neural network approach to estimating the
  conditional density of multiperiod returns,'' {\em Journal of Forecasting},
  vol.~19, no.~4, pp.~299--311, 2000.

\bibitem{meinshausen2006quantile}
N.~Meinshausen, ``Quantile regression forests,'' {\em Journal of Machine
  Learning Research}, vol.~7, no.~Jun, pp.~983--999, 2006.

\bibitem{data-bike}
``Bike sharing dataset data set.''
  \url{https://archive.ics.uci.edu/ml/datasets/bike+sharing+dataset}.
\newblock Accessed: July, 2019.

\bibitem{data-bio}
``Physicochemical properties of protein tertiary structure data set.''
  \url{https://archive.ics.uci.edu/ml/datasets/Physicochemical+Properties+of+Protein+Tertiary+Structure}.
\newblock Accessed: July, 2019.

\bibitem{data-blog}
``Blog{F}eedback data set.''
  \url{https://archive.ics.uci.edu/ml/datasets/BlogFeedback}.
\newblock Accessed: July, 2019.

\bibitem{data-community}
``Communities and crime data set.''
  \url{http://archive.ics.uci.edu/ml/datasets/communities+and+crime}.
\newblock Accessed: July, 2019.

\bibitem{data-concrete}
``Concrete compressive strength data set.''
  \url{http://archive.ics.uci.edu/ml/datasets/concrete+compressive+strength}.
\newblock Accessed: July, 2019.

\bibitem{data-facebook}
``Facebook comment volume data set..''
  \url{https://archive.ics.uci.edu/ml/datasets/Facebook+Comment+Volume+Dataset}.
\newblock Accessed: July, 2019.

\bibitem{data-homes}
``House sales in {K}ing {C}ounty, {USA}.''
  \url{https://www.kaggle.com/harlfoxem/housesalesprediction/metadata}.
\newblock Accessed: August, 2019.

\bibitem{data-meps19}
``Medical expenditure panel survey, panel 19.''
  \url{https://meps.ahrq.gov/mepsweb/data_stats/download_data_files_detail.jsp?cboPufNumber=HC-181}.
\newblock Accessed: July, 2019.

\bibitem{data-meps20}
``Medical expenditure panel survey, panel 20.''
  \url{https://meps.ahrq.gov/mepsweb/data_stats/download_data_files_detail.jsp?cboPufNumber=HC-181}.
\newblock Accessed: July, 2019.

\bibitem{data-meps21}
``Medical expenditure panel survey, panel 21.''
  \url{https://meps.ahrq.gov/mepsweb/data_stats/download_data_files_detail.jsp?cboPufNumber=HC-192
  }.
\newblock Accessed: July, 2019.

\bibitem{data-star}
C.~Achilles, H.~P. Bain, F.~Bellott, J.~Boyd-Zaharias, J.~Finn, J.~Folger,
  J.~Johnston, and E.~Word, ``Tennessee's student teacher achievement ratio
  ({STAR}) project,'' 2008.
\newblock Accessed: July, 2019.

\bibitem{vaart_1998}
A.~W. v.~d. Vaart, {\em Asymptotic Statistics}.
\newblock Cambridge Series in Statistical and Probabilistic Mathematics,
  Cambridge University Press, 1998.

\end{thebibliography}
\bibliographystyle{ieeetr}

\clearpage
\appendix
\section{Proofs} \label{sec:proofs}

\begin{proof}[Proof of Theorem~\ref{thm:oracle-approximation}]
  We begin by considering the case of CQR. For ease of notation and without loss of generality, assume that we have $2n$ data points and $n_1=n_2=n$. Then, we can equivalently rewrite Assumption~\ref{assumption:consistency} as follows:
  \begin{align*}
    \P{\E{\left( \hat{q}_{\alphalorm}(X) - q_{\alphalorm}(X) \right)^2 \mid \hat{q}_{\alphalorm}, \hat{q}_{\alphauprm}} \leq \frac{\eta_{n}}{2} }
    & \geq 1-\frac{\rho_{n}}{2}, \\
    \P{\E{\left( \hat{q}_{\alphauprm}(X) - q_{\alphauprm}(X) \right)^2 \mid \hat{q}_{\alphalorm}, \hat{q}_{\alphauprm}} \leq \frac{\eta_{n}}{2} }
    & \geq 1-\frac{\rho_{n}}{2},
  \end{align*}
  for $n$ large enough, $X \indep \mathcal{I}_1$ and 
  for some sequences $\eta_{n}=o(1)$ and $\rho_{n}=o(1)$, as $n \to \infty$.

Recall that the conformal quantile regression prediction band is defined as:
\begin{align*}
  \hat{C}^{\text{CQR}}_{\alpha}(X_{2n+1}) = \left[ \hat{q}_{\alphalo}(X_{2n+1}) - \hat{Q}_{1-\alpha}(E^{\text{CQR}};\mathcal{I}_2), \hat{q}_{\alphaup}(X_{2n+1}) + \hat{Q}_{1-\alpha}(E^{\text{CQR}};\mathcal{I}_2)\right],
\end{align*}
while the oracle band is:
\begin{align*}
  C^{\text{oracle}}_{\alpha}(X_{2n+1}) = \left[ q_{\alphalo}(X_{2n+1}), q_{\alphaup}(X_{2n+1})\right].
\end{align*}
It suffices to show:
\begin{align*}
  |\hat{q}_{\alphalo}(X_{2n+1}) - \hat{Q}_{1-\alpha}(E^{\text{CQR}};\mathcal{I}_2) - q_{\alphalo}(X_{2n+1})| & = o_{\mathbb{P}}(1), \\
  |\hat{q}_{\alphaup}(X_{2n+1}) + \hat{Q}_{1-\alpha}(E^{\text{CQR}};\mathcal{I}_2) - q_{\alphaup}(X_{2n+1})| & = o_{\mathbb{P}}(1).
\end{align*}
We will proceed in two steps, proving:
\begin{enumerate}[(i)]
  \item $|\hat{q}_{\alphalo}(X) - q_{\alphalo}(X)| = o_{\mathbb{P}}(1)$ and $|\hat{q}_{\alphaup}(X) - q_{\alphaup}(X)| = o_{\mathbb{P}}(1)$, for $X \indep \hat{q}_{\alphalo}, \hat{q}_{\alphaup}$;
  \item $|\hat{Q}_{1-\alpha}(E^{\text{CQR}};\mathcal{I}_2)| = o_{\mathbb{P}}(1)$.
\end{enumerate}
Then the proof will be completed by the triangle inequality.
\begin{enumerate}[(i)]
\item Define the random sets
\begin{align*}
  & B_{n,\text{up}} = \{x : |\hat{q}_{\alphaup}(x)-q_{\alphaup}(x)| \geq \eta_n^{1/3}\},
  & B_{n,\text{lo}} = \{x : |\hat{q}_{\alphalo}(x)-q_{\alphalo}(x)| \geq \eta_n^{1/3}\},
\end{align*}
 and $B_{n} = B_{n,\text{up}} \cup B_{n,\text{lo}}$. We can prove that for a new $X \indep \hat{q}_{\alphalo}, \hat{q}_{\alphaup}$,
\begin{align} \label{eq:p(Bn)}
\P{X \in B_n \mid \hat{q}_{\alphalo}, \hat{q}_{\alphaup}} \leq \eta_{n}^{1/3} + \rho_n.
\end{align}
In fact, in the event
\begin{align} \label{eq:event-consistent}
\left\{\E{\left( \hat{q}_{\alphalo}(X) - \hat{q}_{\alphalo}(X) \right)^2} \leq \frac{\eta_n}{2} \right\}
  \cap
  \left\{ \E{\left( \hat{q}_{\alphaup}(X) - \hat{q}_{\alphaup}(X) \right)^2} \leq \frac{\eta_n}{2} \right\},
\end{align}
we have:
\begin{align*}
  & \P{X \in B_n \mid \hat{q}_{\alphalo}, \hat{q}_{\alphaup}} \\
  & \qquad  = \P{X \in B_{n,\text{lo}} \cup B_{n,\text{up}} \mid \hat{q}_{\alphalo}, \hat{q}_{\alphaup}} \\
  & \qquad \leq \P{X \in B_{n,\text{lo}}  \mid \hat{q}_{\alphalo} } +
    \P{X \in B_{n,\text{up}} \mid \hat{q}_{\alphaup}} \\
  & \qquad = \P{ |\hat{q}_{\alphalo}(X)-q_{\alphalo}(X)| \geq \eta_n^{1/3} \mid \hat{q}_{\alphalo} } + \P{ |\hat{q}_{\alphaup}(X)-q_{\alphaup}(X)| \geq \eta_n^{1/3} \mid \hat{q}_{\alphaup} } \\
  & \qquad = \P{ |\hat{q}_{\alphalo}(X)-q_{\alphalo}(X)|^2 \geq \eta_n^{2/3} \mid \hat{q}_{\alphalo} } + \P{ |\hat{q}_{\alphaup}(X)-q_{\alphaup}(X)|^2 \geq \eta_n^{2/3} \mid \hat{q}_{\alphaup} } \\
  & \qquad \leq \eta_n^{-2/3} \E{\left( \hat{q}_{\alphalo}(X) - \hat{q}_{\alphalo}(X) \right)^2} + \eta_n^{-2/3} \E{\left( \hat{q}_{\alphaup}(X) - \hat{q}_{\alphaup}(X) \right)^2} \\
  & \qquad \leq \eta_n^{1/3}.
\end{align*}
Equation~\eqref{eq:p(Bn)} follows because the event in~\eqref{eq:event-consistent} has probability at least $1-\rho_n$, by Assumption~\ref{assumption:consistency}. This implies:
\begin{align*}
  & |\hat{q}_{\alphaup}(X) - q_{\alphaup}(X)| = o_{\mathbb{P}}(1),
  & |\hat{q}_{\alphalo}(X) - q_{\alphalo}(X)| = o_{\mathbb{P}}(1).
\end{align*}

\item With $B_n$ defined as above, consider the following further partition of the data in $\mathcal{I}_2$:
\begin{align*}
  & \mathcal{I}_{2,a} = \{ i \in \{n+1,\ldots,2n\} : X_i \in B_n^c\},
  & \mathcal{I}_{2,b} = \{ i \in \{n+1,\ldots,2n\} : X_i \in B_n\}.
\end{align*}
By definition, $ \mathcal{I}_{2} = \{n+1, \ldots, 2n\} = \mathcal{I}_{2,a} \cup \mathcal{I}_{2,b}$. Since $B_n$ only depends on the data in $\mathcal{I}_1$, it is independent of $(X_i,Y_i)$ for all $i \in \mathcal{I}_2$. Therefore, the size of $\mathcal{I}_{2,b}$ conditional on the data in $\mathcal{I}_{1}$ can be bounded using Hoeffding's inequality, thanks to Assumption~\ref{assumption:iid}. We already know that the probability that any $i \in \{n+1,\ldots,2n\}$ belongs to $\mathcal{I}_{2,b}$ is smaller than $\eta_n^{1/3}$. In particular, conditional on $\hat{q}_{\alphalo}$ and $\hat{q}_{\alphaup}$,
\begin{align*}
  \P{|\mathcal{I}_{2,b}| \geq n \eta_n^{1/3} + t }
  & \leq \P{\frac{1}{n} \sum_{i=n+1}^{2n} \I{X_i \in B_n}  \geq \P{X_i \in B_n} + \frac{t}{n} }
   \leq \exp \left( -\frac{2t^2}{n} \right).
\end{align*}
Choosing $t = c\sqrt{n \log n}$ leads to $|\mathcal{I}_{2,b}| = o_{\mathbb{P}}(n)$ because
\begin{align*}
  \P{|\mathcal{I}_{2,b}| \geq n \eta_n^{1/3} + c\sqrt{n \log n} }
  & \leq n^{-2c^2}.
\end{align*}
Now, define $\tilde{E}^{\text{CQR}}_i = \max\{ q_{\alphalo}(X_{n+1}) - Y, Y - q_{\alphaup}(X_{n+1}) \}$ for any $i \in \mathcal{I}_2$. By definition of $E^{\text{CQR}}_i$, for all $i \in \mathcal{I}_{2,a}$,
\begin{align*}
  E^{\text{CQR}}_i
  & = \max\{ \hat{q}_{\alphalo}(X_{n+1}) - Y, Y - \hat{q}_{\alphaup}(X_{n+1}) \} \\
  & = \max\{ \hat{q}_{\alphalo}(X_{n+1}) - q_{\alphalo}(X_{n+1}) + q_{\alphalo}(X_{n+1}) - Y, \\
  & \textcolor{white}{= \max\{\, } Y - q_{\alphaup}(X_{n+1}) + q_{\alphaup}(X_{n+1}) - \hat{q}_{\alphaup}(X_{n+1}) \} \\
  & \leq \max\{ \eta_n^{1/3} + q_{\alphalo}(X_{n+1}) - Y, Y - q_{\alphaup}(X_{n+1}) + \eta_n^{1/3} \} \\
  & = \eta_n^{1/3} + \max\{ q_{\alphalo}(X_{n+1}) - Y, Y - q_{\alphaup}(X_{n+1}) \} \\
  & = \eta_n^{1/3} + \tilde{E}^{\text{CQR}}_i.
\end{align*}
Proceeding similarly, one can also show that $E^{\text{CQR}}_i \geq \tilde{E}^{\text{CQR}}_i - \eta_n^{1/3}$. Hence, for all $i \in \mathcal{I}_{2,a}$,
\begin{align*}
  \left| E^{\text{CQR}}_i - \tilde{E}^{\text{CQR}}_i \right| \leq \eta_n^{1/3}.
\end{align*}
Therefore, all empirical quantiles of $E^{\text{CQR}}_i$ and $\tilde{E}^{\text{CQR}}_i$, for $i \in \mathcal{I}_{2,a}$, are within $\eta_n^{1/3}$.

Let $F_n$ and $\tilde{F}_n$ denote the empirical distributions of $E^{\text{CQR}}_i$ and $\tilde{E}^{\text{CQR}}_i$ for $i \in \mathcal{I}_2$, respectively. Define also $F_{n,a}$ and $\tilde{F}_{n,a}$ as the corresponding empirical distributions when $i$ is restricted to $\mathcal{I}_{2,a}$. For $n$ large enough, one can assume without loss of generality that $|\mathcal{I}_{2,b}|/|\mathcal{I}_{2,a}| \leq \alpha$ because $|\mathcal{I}_{2,b}| = o_{\mathbb{P}}(n)$. Then we can show that
\begin{align} \label{eq:quantile-sandwich}
  F^{-1}_{n,a}\left( 1- \frac{n\alpha}{|\mathcal{I}_{2,a}|} \right)
  & \leq F^{-1}_{n}\left( 1- \alpha \right)
    \leq F^{-1}_{n,a}\left( 1- \frac{n\alpha -|\mathcal{I}_{2,b}| }{|\mathcal{I}_{2,a}|} \right).
\end{align}
To prove the second inequality in \eqref{eq:quantile-sandwich}, note that if all the $E^{\text{CQR}}_i$, for $i \in \mathcal{I}_{2,b}$, are in the upper $\alpha$-quantile of $F_{n}$, then
\begin{align*}
  F^{-1}_{n,a}\left( 1- \frac{n\alpha -|\mathcal{I}_{2,b}| }{n} \frac{n}{|\mathcal{I}_{2,a}|} \right)
  & = F^{-1}_{n}\left( 1- \alpha \right).
\end{align*}
However, in general the quantiles of $F_{n,a}$ will be larger and
\begin{align*}
  F^{-1}_{n,a}\left( 1- \frac{n\alpha -|\mathcal{I}_{2,b}| }{|\mathcal{I}_{2,a}|} \right)
  & \geq F^{-1}_{n}\left( 1- \alpha \right).
\end{align*}
To prove the first inequality in \eqref{eq:quantile-sandwich}, note that if all the $E^{\text{CQR}}_i$ for $i \in \mathcal{I}_{2,b}$ are in the lower $1-\alpha$ quantile of $F_{n}$,
\begin{align*}
  F^{-1}_{n,a}\left( 1- \frac{n\alpha}{|\mathcal{I}_{2,a}|} \right)
  & = F^{-1}_{n}\left( 1- \alpha \right).
\end{align*}
However, in general the quantiles of $F_{n,a}$ will be smaller and
\begin{align*}
  F^{-1}_{n,a}\left( 1- \frac{n\alpha}{|\mathcal{I}_{2,a}|} \right)
  & \leq F^{-1}_{n}\left( 1- \alpha \right).
\end{align*}
This completes the proof of \eqref{eq:quantile-sandwich}. By combining this with the previous result that all empirical quantiles of $E^{\text{CQR}}_i$ and $\tilde{E}^{\text{CQR}}_i$, for $i \in \mathcal{I}_{2,a}$, are within $\eta_n^{1/3}$, we obtain:
\begin{align*}
  \tilde{F}^{-1}_{n,a}\left( 1- \frac{n\alpha}{|\mathcal{I}_{2,a}|} \right) - \eta_n^{1/3}
  & \leq F^{-1}_{n}\left( 1- \alpha \right)
    \leq \tilde{F}^{-1}_{n,a}\left( 1- \frac{n\alpha -|\mathcal{I}_{2,b}| }{|\mathcal{I}_{2,a}|} \right) + \eta_n^{1/3}.
\end{align*}
Recall that we have defined $\hat{Q}_{1-\alpha}(E;\mathcal{I}_2) = F^{-1}_{n}\left( 1- \alpha_n \right)$, where $\alpha_n = \alpha - (1-\alpha)/n$.
Therefore,
\begin{align*}
  \tilde{F}^{-1}_{n,a}\left( 1- \frac{n\alpha_n}{|\mathcal{I}_{2,a}|} \right) - \eta_n^{1/3}
  & \leq \hat{Q}_{1-\alpha}(E;\mathcal{I}_2)
    \leq \tilde{F}^{-1}_{n,a}\left( 1- \frac{n\alpha_n -|\mathcal{I}_{2,b}| }{|\mathcal{I}_{2,a}|} \right) + \eta_n^{1/3}.
\end{align*}
Note that $Q_{1-\alpha}(\tilde{E}^{\text{CQR}}) = 0$. This follows immediately from its definition:
\begin{align*}
  \P{\tilde{E}^{\text{CQR}} \leq 0}
  & = \P{\max\{ q_{\alphalo}(X_{n+1}) - Y, Y - q_{\alphaup}(X_{n+1}) \} \leq 0} \\
  & = \P{ Y \in [q_{\alphalo}(X_{n+1}), q_{\alphaup}(X_{n+1})] }
    = 1-\alpha.
\end{align*}
Then we know from Assumptions~\ref{assumption:iid}--\ref{assumption:regularity} and classical asymptotic theory \cite[Chapter~21]{vaart_1998} that $\tilde{F}^{-1}_{n,a}(1-\alpha)$ is within $o_{\mathbb{P}}(1)$ of the upper $\alpha$ population quantile of $\tilde{E}^{\text{CQR}}_i$, which we denote by $Q_{1-\alpha}(\tilde{E}^{\text{CQR}})$.
Therefore, since both $n\alpha_n/|\mathcal{I}_{2,a}|$ and $(n\alpha_n -|\mathcal{I}_{2,b}|)/|\mathcal{I}_{2,a}|$ are  $\alpha + o_{\mathbb{P}}(1)$, we have
\begin{align*}
  \left| \hat{Q}_{1-\alpha}(E;\mathcal{I}_2) - Q_{1-\alpha}(\tilde{E}^{\text{CQR}}) \right| = o_{\mathbb{P}}(1).
\end{align*}
\end{enumerate}
This completes the proof in the case of CQR. The proof for CQR-m and CQR-r are analogous.

\end{proof}

\clearpage
\section{Supplementary figures} \label{app:plots}

\begin{figure}[!htb]
  \centering
  \includegraphics[]{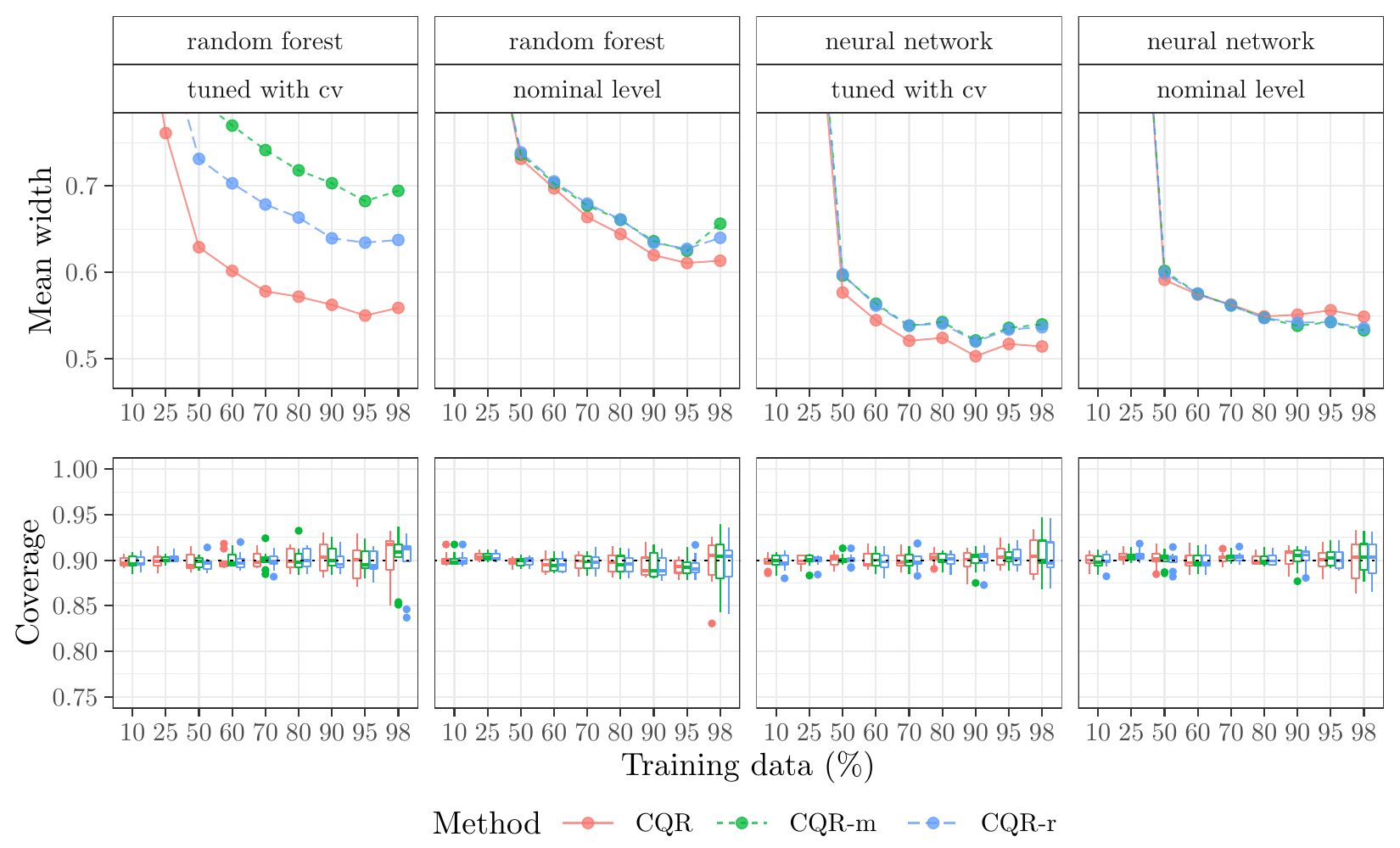}
  \caption{Results on the \textit{bike} data. Other details as in Figure~\ref{fig:community}. The plots are truncated from above on the vertical axis to focus on the most interesting region. In the second through fourth plots on top, the curves corresponding to CQR-m and CQR-r are almost overlapping.}
  \label{fig:bike}
\end{figure}

\begin{figure}[!htb]
  \centering
  \includegraphics[]{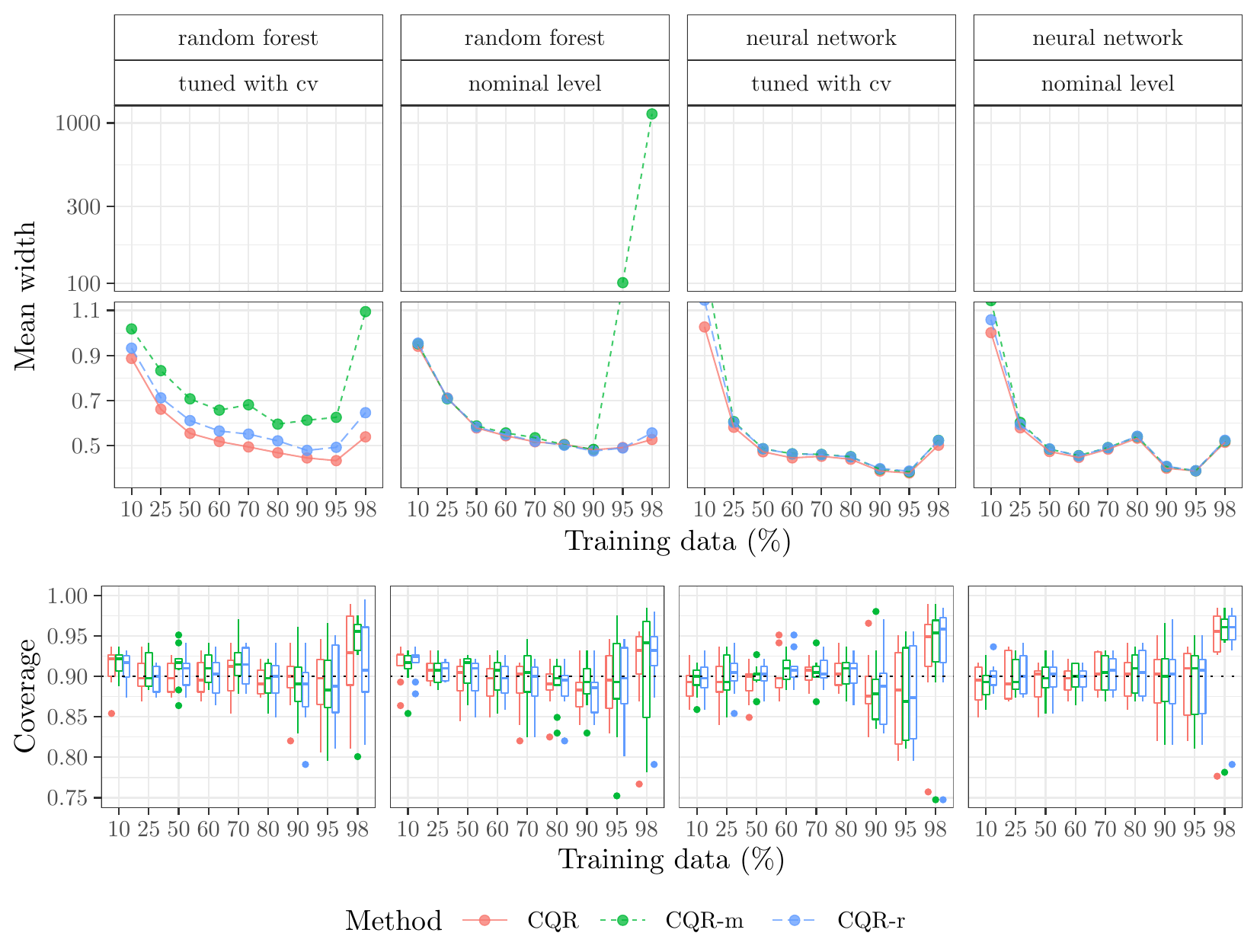}
  \caption{Results on the \textit{concrete} data. Other details as in Figure~\ref{fig:bike}. In the second through fourth plots on top, the three curves are mostly overlapping. The vertical axis in the upper panels is discontinuous to facilitate the visualization of values on different scales.
  }
  \label{fig:concrete}
\end{figure}

\begin{figure}[!htb]
  \centering
  \includegraphics[]{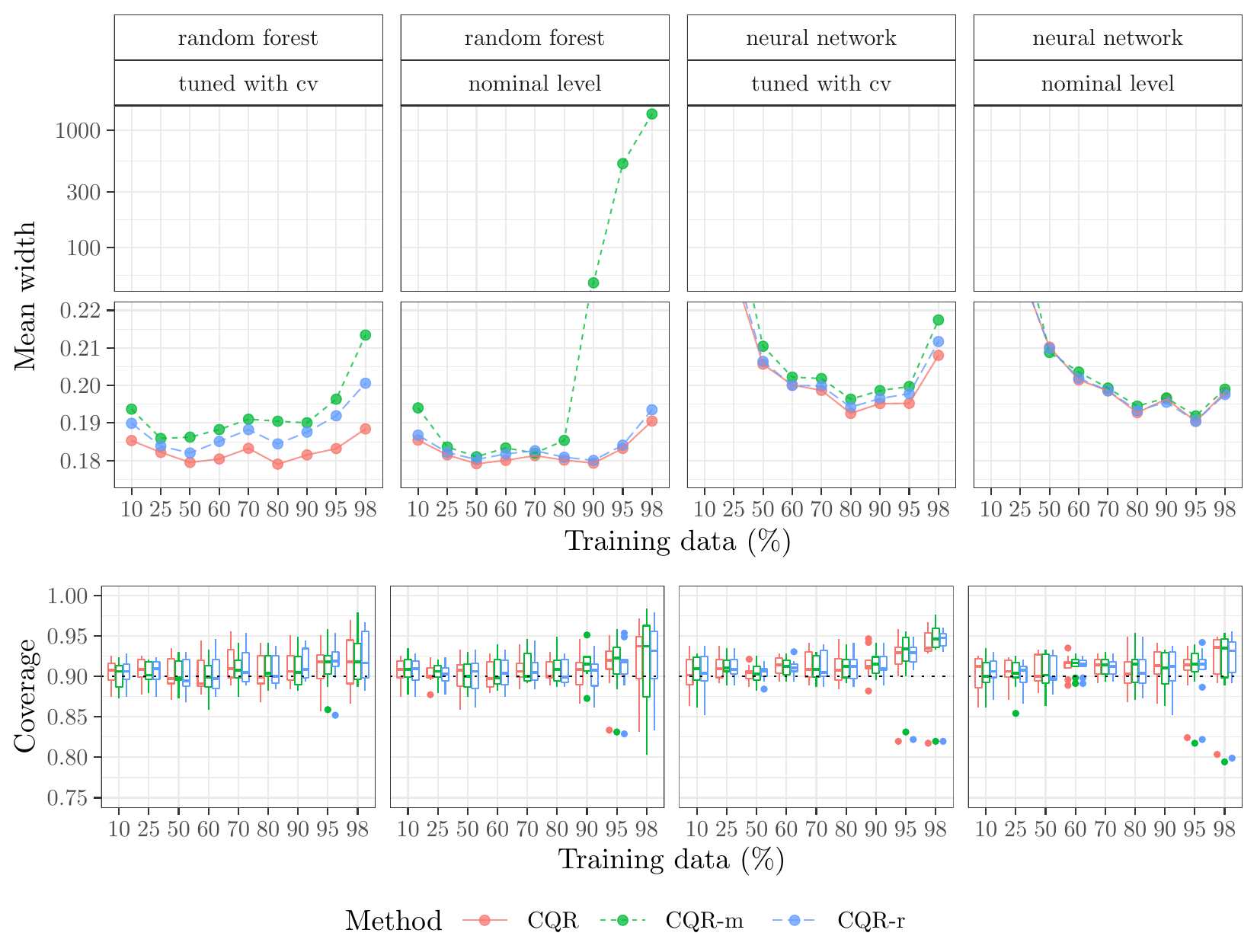}
  \caption{Results on the \textit{star} data. Other details as in Figure~\ref{fig:bike}. In the second through fourth plots on top, the three curves are mostly overlapping. The vertical axis in the upper panels is discontinuous to facilitate the visualization of values on different scales.
  }
  \label{fig:star}
\end{figure}

\begin{figure}[!htb]
  \centering
  \includegraphics[]{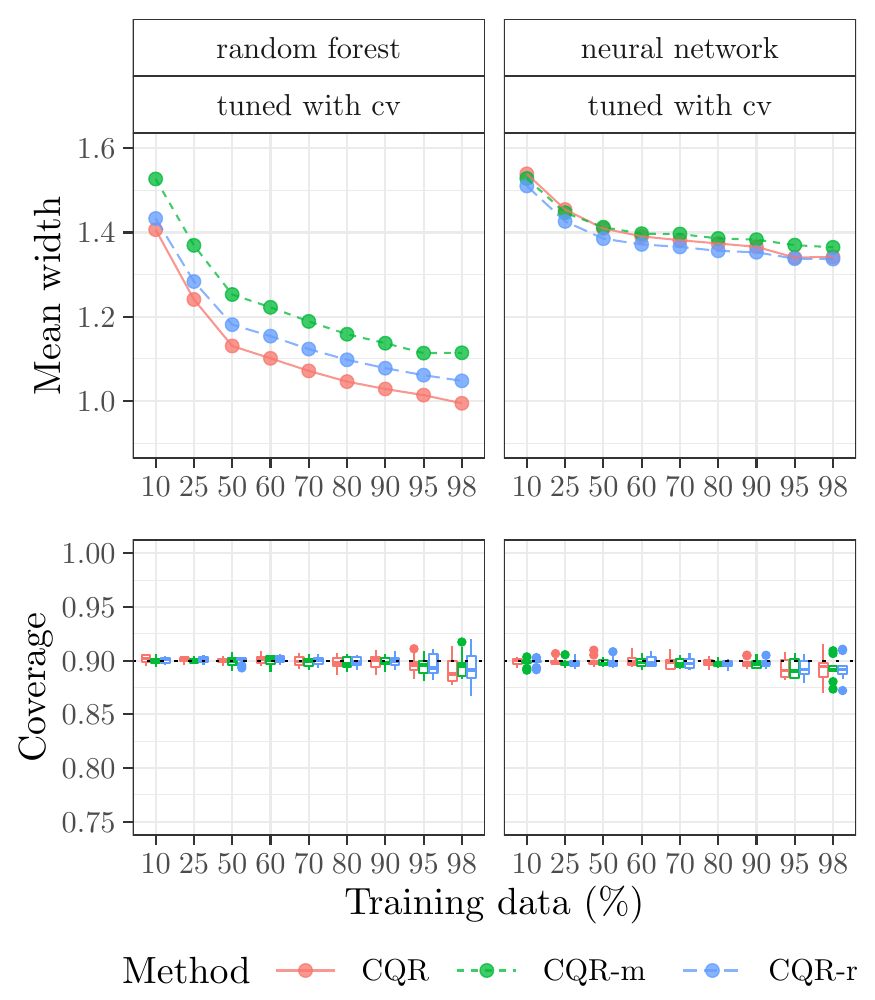}
  \caption{Results on the \textit{bio} data. Other details as in Figure~\ref{fig:bike}.
  }
  \label{fig:bio}
\end{figure}

\begin{figure}[!htb]
  \centering
  \includegraphics[]{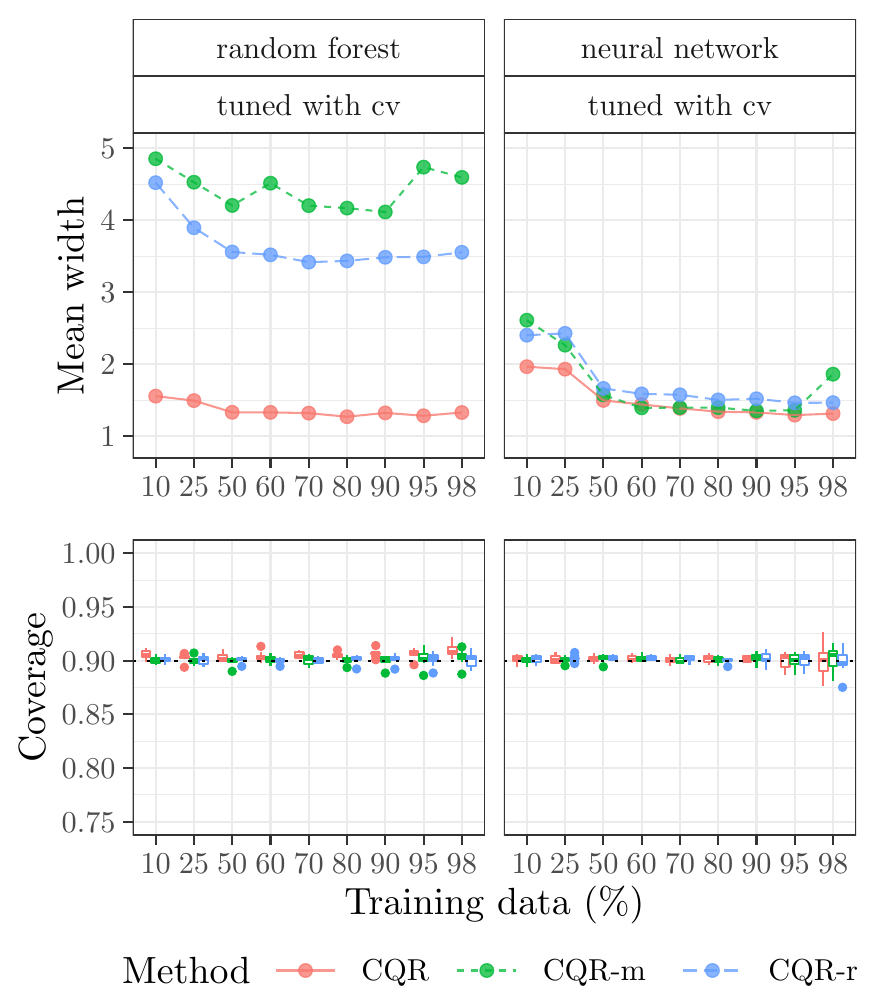}
  \caption{Results on the \textit{blog} data. Other details as in Figure~\ref{fig:bike}.
  }
  \label{fig:blog_data}
\end{figure}

\begin{figure}[!htb]
    \centering
    \begin{subfigure}[t]{0.5\textwidth}
        \centering
        \includegraphics[]{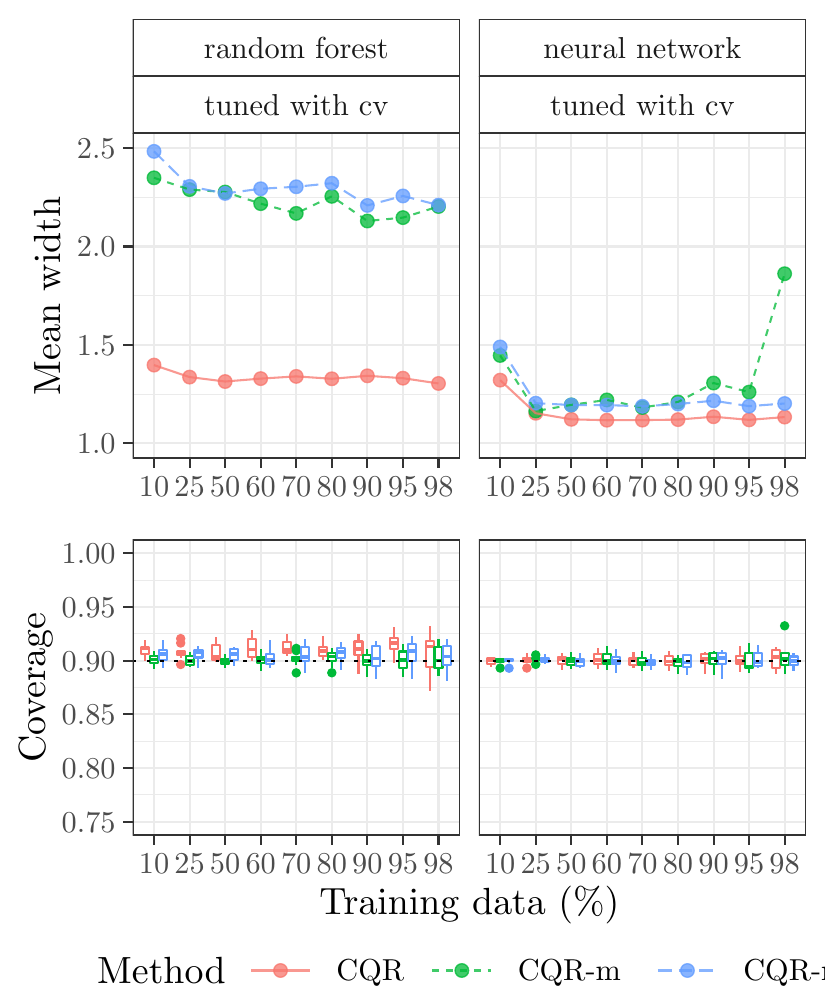}
        \caption{facebook-1}
    \end{subfigure}%
    ~
    \begin{subfigure}[t]{0.5\textwidth}
        \centering
        \includegraphics[]{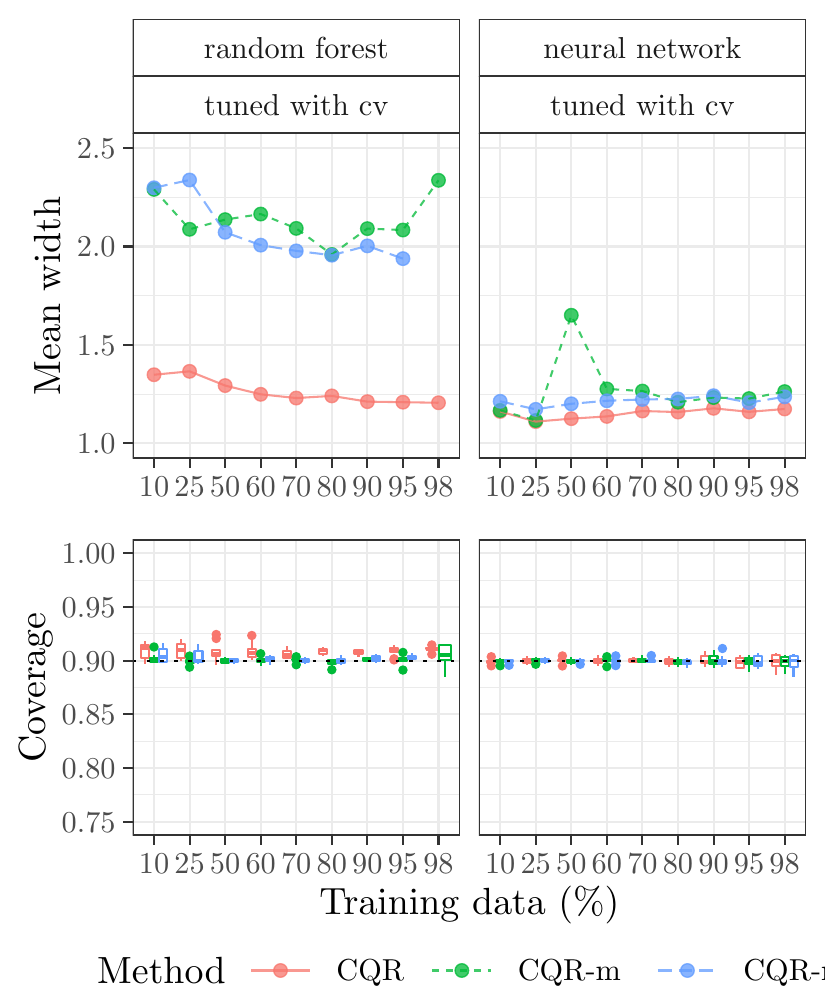}
        \caption{facebook-2}
    \end{subfigure}
  \caption{Results on the \textit{facebook} data. Other details as in Figure~\ref{fig:bike}.
  }
  \label{fig:facebook}
\end{figure}

\begin{figure}[!htb]
    \centering
    \begin{subfigure}[t]{0.5\textwidth}
        \centering
        \includegraphics[]{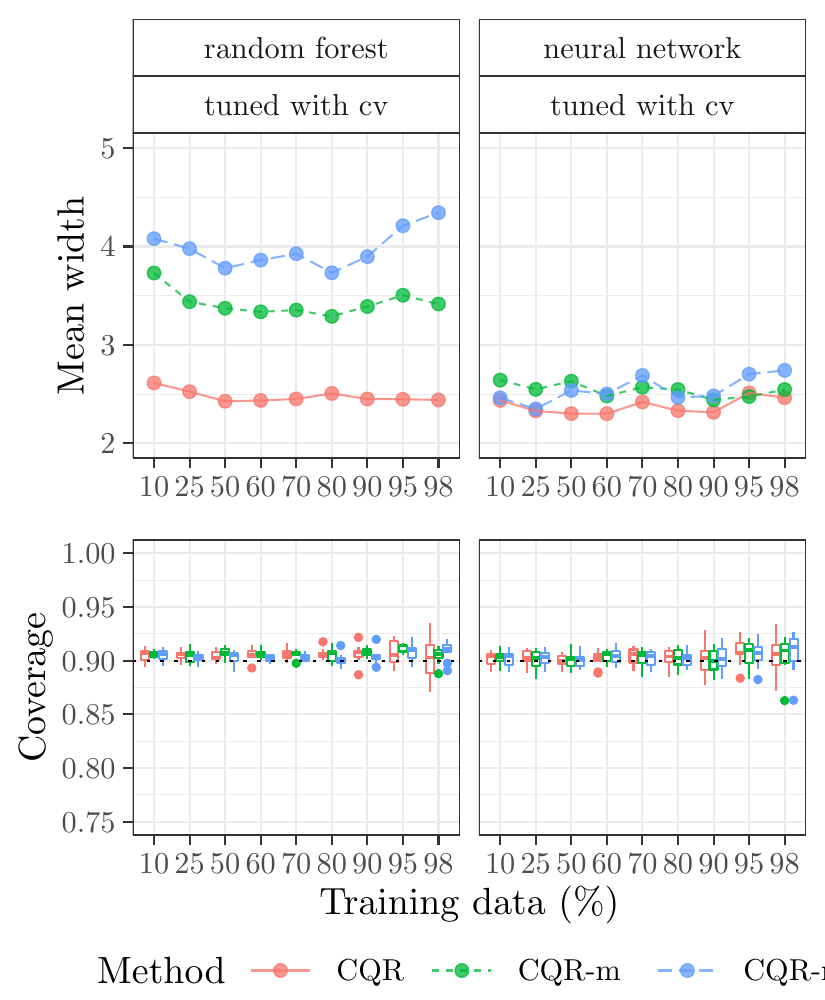}
        \caption{meps-19}
    \end{subfigure}%
    ~
    \begin{subfigure}[t]{0.5\textwidth}
        \centering
        \includegraphics[]{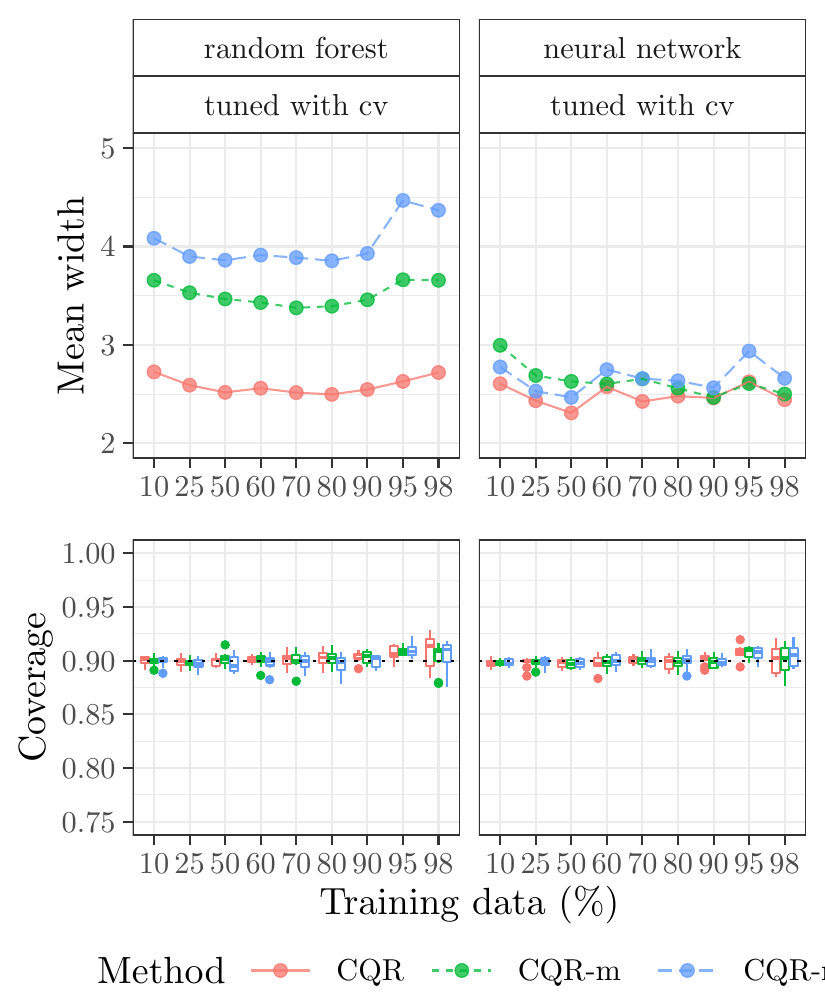}
        \caption{meps-20}
    \end{subfigure}
    ~
    \begin{subfigure}[t]{0.5\textwidth}
        \centering
        \includegraphics[]{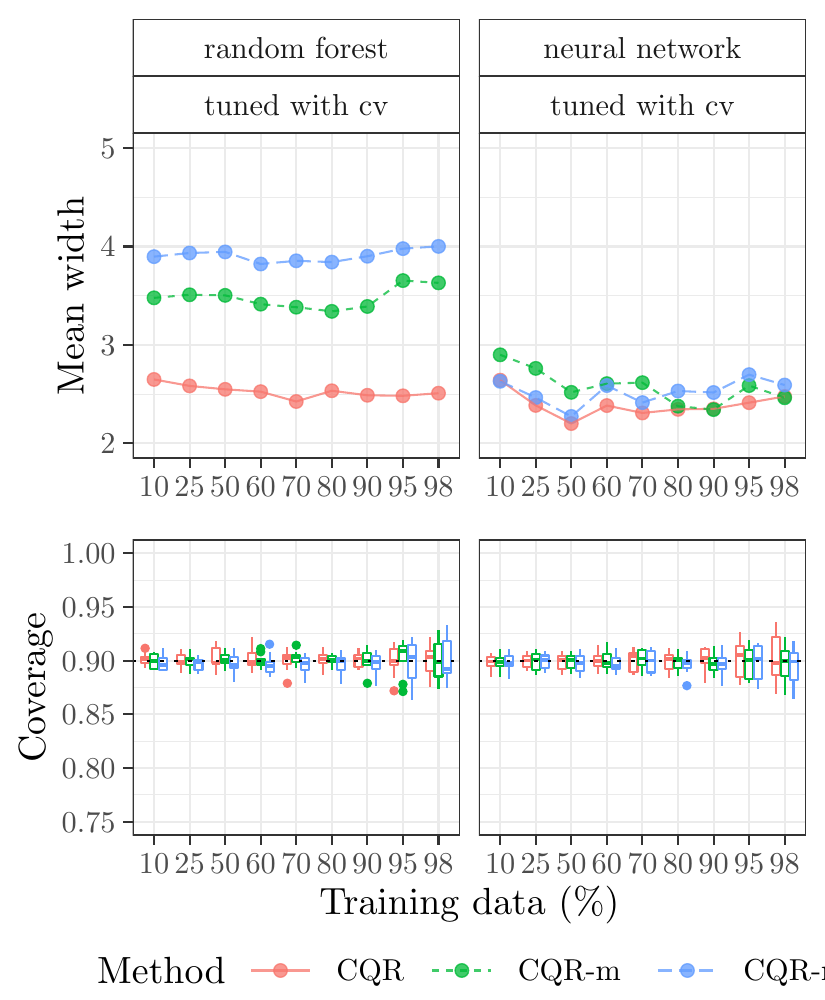}
        \caption{meps-21}
    \end{subfigure}
  \caption{Results on the \textit{meps} data. Other details as in Figure~\ref{fig:bike}.
  }
  \label{fig:meps}
\end{figure}

\end{document}